\newcommand{\cmnt}[1]{}
\newcommand{\ignore}[1]{}
\newcommand{\remove}[1]{}
\newcommand{\ceil}[1]{\lceil #1 \rceil}
\newcommand {\spcolor}[1] {\textcolor{black}{#1}}
\newcommand {\mpcolor}[1] {\textcolor{black}{#1}}
\newtheorem{observation}[lemma]{Observation}
\newcommand{\secref}[1]{Section~\ref{sec:#1}}
\newcommand{\figref}[1]{Fig.~\ref{fig:#1}}
\newcommand{\thmref}[1]{Theorem~\ref{thm:#1}}
\newcommand{\lineref}[1]{Line~\ref{lin:#1}}
\newcommand{\algoref}[1]{{Algorithm \ref{alg:#1}}}
\newcommand{\Lineref}[1]{Line~\ref{lin:#1}}
\newcommand{\wf} {wait-free\xspace}
\newcommand{\CAS} {\textit{CAS}\xspace}
\newcommand{\cas} {CAS\xspace}
\newcommand{\Fwrite}{{\tt StickyCAS-append}\xspace}
\newcommand{\Fview}{{\tt StickyCAS-read}\xspace}
\newcommand{\BVCAS}{StickyCAS\xspace}
\newcommand{\bvcas}{StickyCAS\xspace}
\newcommand{\abcas}{StickyCAS\xspace}
\newcommand{\ie}{\textit{i.e., }}
\newcommand{\mth} {method\xspace}
\newcommand{\sbc} {Strong Byzantine Consensus\xspace}
\newcommand{\cvbc} {Common-value Byzantine Consensus\xspace}
\newcommand{\wbc} {Weak Byzantine Consensus\xspace}
\begin{document}
\title{Byzantine-Tolerant Consensus in GPU-Inspired Shared Memory}
\author{Chryssis Georgiou\inst{1} \and
Manaswini Piduguralla\inst{2} \Envelope \and
Sathya Peri\inst{2}}

\institute{
    University of Cyprus, Cyprus
    \email{chryssis@ucy.ac.cy }
    \and
    Indian Institute of Technology Hyderabad, India
    \email{\{cs20resch11007@,sathya\_p@cse.\}iith.ac.in} 
}\authorrunning{C. Georgiou et al.}
\titlerunning{Byzantine Consensus in GPU}
\maketitle

\begin{abstract}
In this work, we formalize a novel shared memory model inspired by the popular GPU architecture. Within this model, we develop algorithmic solutions to the Byzantine Consensus problem and analyze their fault-resilience. 

\keywords{GPU  \and Byzantine failures \and Consensus \and Shared memory \and CAS.}
\end{abstract}

\section{Introduction}
\label{sec:Intro}
In modern computing, Graphics Processing Units (GPUs) have transcended their traditional role in graphics rendering, emerging as powerful platforms for parallel computation. Unlike conventional CPUs, which have been widely studied, GPUs offer a fundamentally different architectural approach. We believe that leveraging GPU-based models can provide innovative solutions to critical distributed computing challenges, including consensus, leader election, and atomic broadcast. In this paper, we introduce a GPU-inspired computational model and demonstrate its effectiveness in addressing the Byzantine consensus problem. To establish a foundational understanding of GPU architecture, we first present key concepts. 

\noindent
\textbf{The GPU Architecture.} GPU is a specialized processor originally developed to meet the demands of the rapidly expanding video game industry. Designed to execute a large number of floating point calculations and memory operations per video frame in advanced games, GPUs have since found widespread applications in scientific computing, artificial intelligence (AI), and high performance computing (HPC)\cite{david+:PMPP:science:2017}. Unlike traditional Central Processing Units (CPUs), which excel at managing complex control logic, GPUs are optimized for efficiently handling thousands of simple parallel tasks. 
\begin{figure}[t]
\centering
\includegraphics[scale=0.46]{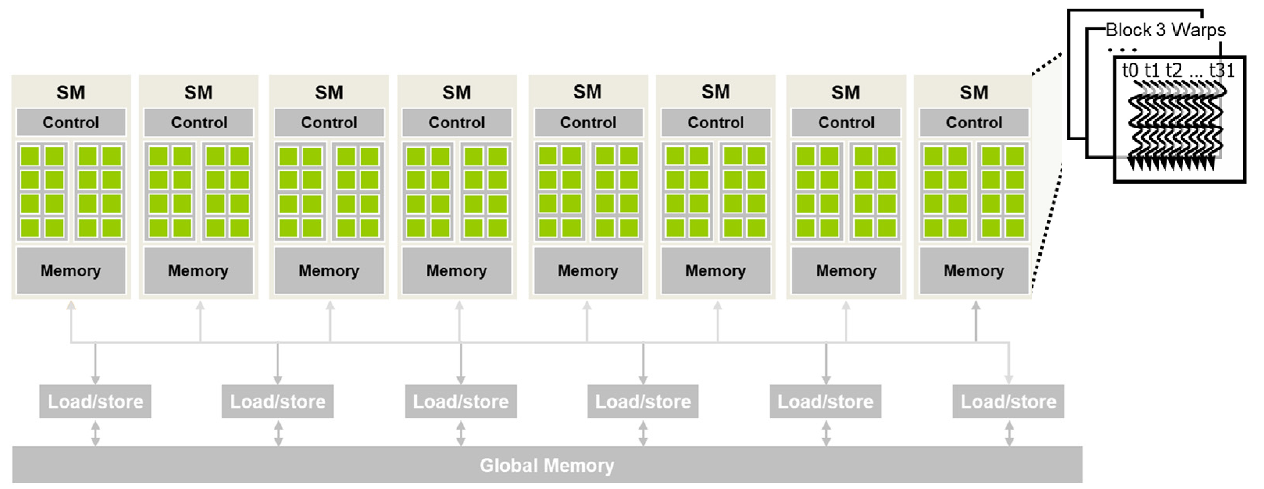}
\caption{Illustration of CUDA-capable GPU Architecture\cite{david+:PMPP:science:2017}}
\label{fig:gpuArch}
\end{figure}

\figref{gpuArch} presents a high-level overview of a typical GPU architecture. The architecture is built around \emph{Streaming Multiprocessors (SMs)}, which are the fundamental computational units. Each SM comprises multiple processing units, known as \emph{streaming processors} or \emph{cores}. However, these cores lack individual program counters, unlike CPU cores. Within an SM, the computation is further organized into {\bf\em thread blocks}, where each block consists of a group of threads that collaborate and share local resources, such as registers and shared memory. Execution within an SM occurs in groups of (currently) 32 threads, 
referred to as {\bf\em  warps}, which operate under the \textbf{Single Instruction, Multiple Threads (SIMT)} execution model. In this model, all threads within a warp execute the same instructions simultaneously but possibly on different data. 
Additionally, suppose that the total number of active threads in a warp is fewer than the number of available cores. In that case, the unused cores will remain idle, underutilizing the GPU's processing power. Thus, {\em threads within a warp execute synchronously} in a lock-step fashion, whereas {\em warps within a block execute asynchronously} and may proceed in any order relative to one another. The hierarchy of GPU components and the execution model is illustrated in \figref{GPU-hier}.

\begin{wrapfigure}{r}{0.5\textwidth} 
    \centering
    \includegraphics[scale=0.3]{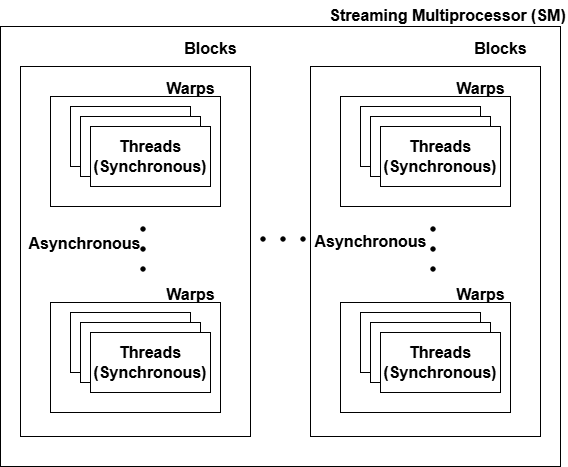}
    \caption{Illustration of GPU Hierarchy}
    \label{fig:GPU-hier}
\end{wrapfigure}Registers and shared memory are on-chip memories, and registers are allocated to individual threads; each thread can access only its own registers \cite{david+:PMPP:science:2017}. Shared memory is allocated to thread blocks; all threads in a block can access shared memory variables declared for the block.  Some GPU implementations (example: NVIDIA) come with their own hardware internal scheduler, called {\bf\em warp scheduler}, which allocates the warps to whatever GPU hardware is present \cite{Olmedo+:RTAS:IEEE:2020}. Each warp executes instructions in cycles, with the warp scheduler selecting which warps to execute based on resource availability. In each cycle, a warp executes a constant (e.g, one or two) number of instructions~\cite{Ogier:book:2013,NVIDIA:2024}. This scheduling approach is designed to mask the latency that warps experience while preparing the next instruction for execution.

GPUs and CPUs have fundamentally different architectures which are optimized for distinct workloads. CPUs are designed to minimize instruction execution latency, whereas GPUs are built to maximize throughput \cite{david+:PMPP:science:2017}. As discussed earlier, threads within a warp execute synchronously, while execution across warps is asynchronous. The warp scheduler also ensures progress for the threads, which otherwise can be a major challenge in designing solutions for asynchronous processes. Additionally, constructing shared memory with varying read-write access across multiple threads is not as straightforward as in CPUs. 




\noindent
\textbf{Byzantine Faults in GPU Systems.} In modern computing environments, GPUs are increasingly shared across multiple processes, such as in NVIDIA’s Multi-Process Service (MPS), requiring a mechanism for fair, efficient and fault-tolerant resource allocation \cite{Weaver+:SC24:2024}. Furthermore, in HPC clusters, where multiple GPUs collaborate on complex computations, failures in individual nodes can jeopardize the integrity of the entire task. Another key motivation stems from the well-documented observation that GPUs exhibit a higher susceptibility to hardware errors compared to CPUs \cite{CiniYalcin:GPU-CPU:ACMSurvey:2020}.

 These hardware faults can manifest as arbitrary or erroneous software behavior, which can be particularly challenging to predict and mitigate. Transient faults, such as bit flips caused by cosmic rays or voltage fluctuations, are notoriously difficult to model. We argue that a Byzantine fault-tolerant (BFT) approach is well-suited for handling such unpredictable failures, as it can provide correctness guarantees even in the presence of arbitrary faults.


\noindent
\textbf{Motivation for Byzantine Consensus in GPUs.} Agreement among multiple parties is a fundamental problem in distributed computing  for a wide range of applications \cite{kshemkalyani:book:2011}. Considerable research efforts have been directed toward delving into consensus mechanisms within shared memory systems. Currently, there is an increased emphasis on Byzantine fault-tolerant shared objects \cite{CKDISC2021,KMPOPODIS2023}.\mpcolor{ This interest is further motivated by the universality of consensus \cite{Maurice:1991:ACMTrans}, as solving consensus enables the wait-free implementation of all shared objects in asynchronous systems.}

 A robust consensus protocol can ensure priority-aware scheduling, prevent resource starvation, and optimize performance in multi-tenant GPU systems \mpcolor{ \cite{Kolesnichenko+:2017+GPCE}}. Thus, considering the popularity of GPU systems and the Byzantine errors that can possibly be exhibited by these systems (as argued above), in this work we explore Byzantine Fault-tolerant Consensus algorithms in GPU Shared memory systems.


\noindent
\textbf{Contributions.} The main contributions of this paper are the following:\break (a) We formalize a novel GPU-inspired Shared Memory Model that abstracts the unique features of the GPU architecture  detailed above (\secref{model}).\break (b) We detail the Byzantine Consensus problems we consider in this work (Section~\ref{sec:ByzIntro}) and develop a Byzantine-tolerant consensus solution tailored for this memory model, thus demonstrating its utility (\secref{solns}). Our solution utilizes the \BVCAS object, a novel shared object we introduce and specify. In Appendix~\ref{app}, we provide a \BVCAS implementation which is built using CAS.\break (c) \mpcolor{We prove }the solution's correctness and fault-resilience (Section~\ref{sec:correctness}). 


\section{The GPU-inspired Shared Memory Model}
\label{sec:model}
In this section, we formalize a shared memory model inspired by the GPU framework detailed in Section~\ref{sec:Intro}. We consider a shared memory system where processes communicate only by accessing shared memory objects. Following Cohen and Keidar~\cite{CKDISC2021}, we assume that the shared memory is reliable (e.g., it cannot become inaccessible or corrupted), which is also commonly assumed in HPC Applications~\cite{Huang+:SDC-HPC:SC:2022}. The proposed model is depicted  in \figref{SysModel} and the various components are detailed here. 

\ignore{
\noindent{\bf GPU Execution Model.} In this work, we are inspired by General Purpose Graphics Processing Unit (GPU) model of semi synchronous execution. As detailed in \secref{Intro}, the architecture of the GPUs are hierarchically categorized as cores, warps, blocks and SMs \cite{Ansorge:CUDAProg:CUP:2022,nextgenarch}.

\textbf{\em Warp}: A group of 32 cores constitute a ``warp-engine'' or a ``warp''. The threads in a given warp execute in lock-step running the same instruction at every clock-cycle. All the programs in a given warp maintain a single program counter executing in SIMD (Single Instruction Multiple Data) mode. In this work, we view a warp as a basic unit of execution with all the threads in a warp executing in a synchronous lock-step manner. 
In a scenario where different threads within a single warp need to execute different instructions, some threads may be idle (executing ``no-ops'') while others are actively executing instructions. This is to ensure that all the active cores are executing the same instruction. Additionally, if the total number of active threads in a warp is fewer than the number of available cores, the unused cores will remain idle, resulting in underutilization of the GPU's processing power.
}

\noindent{\bf Processes.} The system consists of a known static set of $n$ processes, $P_1, P_2, ... P_n$. These processes correspond to {\em threads} of a block of a Streaming Multiprocessor in a GPU. As discussed, these threads are distributed in {\em warps} (recall Fig.~\ref{fig:GPU-hier}). We denote by $p$ the size (number of processes) of a warp. Thus, the system consists of $\ceil{n/p}$ warps. In this paper, we focus on a single block of a single SM as our system (the circled components in \figref{SysModel}). 



\noindent{\bf \mpcolor{Phase-Based Computation.}}
The computation proceeds in {\em synchronized phases}. In a given phase, only one warp is scheduled by the {\em warp scheduler} (recall Section~\ref{sec:Intro}; also see below). Since there is one warp per phase in our model, the terms ``warp'' and ``phase'' inherently mean the same in this document. Within a phase, each process can perform a constant number of instructions, which include access to the shared objects along with some local computation. 
Recall from Section~\ref{sec:Intro} that processes (threads) in the same warp operate synchronously in a lock-step fashion, whereas processes across different warps operate asynchronously. Thus, processes in the same phase are synchronous and perform the same instructions (except Byzantine ones -- see below), but phase duration might be different. However, due to the small warp cycle approach of GPUs (cf. Section~\ref{sec:Intro}), the latency of the phases is bounded to a specific number of instructions. In other words, the scheduler enables processes in a warp to perform the same (small) number of instructions (even for the Byzantine ones).    



\begin{figure}[t]
\centering
\includegraphics[scale=0.37]{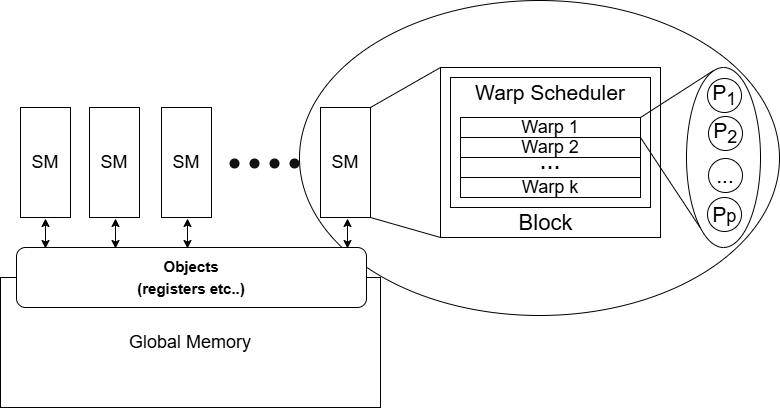}
\caption{Illustration of the system model}
\label{fig:SysModel}
\end{figure}

\ignore{
in the \figref{SysModel}. These processes operate in {\em synchronized phases}. The processes are the threads $t0,t1,t2..$ in \figref{gpuArch}, and a phase in our proposed model is one warp in the GPU system. In particular,
}

\noindent {\bf \mpcolor{Shared Memory System.}} 
Shared memory systems are characterized by a unified address space that is accessible to all components within the system~\cite{kshemkalyani:book:2011}. 
As typically assumed in the shared memory model, cf.~\cite{CKDISC2021}, the shared memory is non-corruptible and can only be accesses through objects like registers. Specifically, the communication between processes is through the APIs exposed by the objects in the system; processes invoke operations that in turn, return some response back to the process.

\noindent{\bf \mpcolor{Failure Model.}} We consider {\em Byzantine failures}~\cite{ByzGens}. In particular, we consider an adversary that may adaptively corrupt up to $f<n$ processes in the course of the computation. A corrupted process, called {\em Byzantine} or {\em faulty}, may deviate arbitrarily from the specified protocol (for example, instead of performing a read, it might choose to perform a write)~\cite{CKDISC2021}. More specific to our model, a faulty process within a warp in a given phase might perform a different set of instructions than the one indicated by the protocol. However, it cannot exceed the latency of the given phase or impersonate another process (and thus gain access to a phase in place of another process). A process that is not corrupted, called \textit{correct}, follows the protocol and takes infinitely many steps. We will be referring to a protocol that can tolerate up to $f$ Byzantine processes as ${f}${\em -resilient}. 

\noindent{\bf \mpcolor{Warp Scheduler.}} As mentioned, GPU architectures typically include a hardware scheduler to schedule the warps~\cite{Ogier:book:2013}. The scheduler maps each thread in the warp to an individual core, and then the threads start executing. {Threads (e.g., Byzantine) cannot masquerade as others to deceive the scheduler nor can they bypass the scheduler, as by design, processes/threads can only be activated for computation by the scheduler.} Motivated by existing hardware available to multi-core systems, in this work we consider {\em fair scheduling} \cite{wong:SIGOPS:2008,pabla:linux:2009}, which gives every process a chance to execute. Further, we assume that the warp scheduler does not exhibit Byzantine behavior, as otherwise it would not be fair anymore. The fair scheduler together with the synchronous execution of the threads inside the warp, realize the synchronized phase execution. 


As one might conclude, our proposed model abstracts the components and the operation of modern GPUs, as extensively analyzed in~\secref{Intro}. By establishing this model, we aim to explore its potential for solving fundamental problems in the field. To our knowledge, this is the first work that leverages a shared memory system model inspired by GPU architectures to address the Byzantine consensus problem, which we present next.

\section{Byzantine Consensus}
\label{sec:ByzIntro}
Generally speaking, in the consensus problem, a collection of $n$ processes propose values, and they need to agree on a value. Depending on the requirements imposed on the value agreed (validity property), we have different variations of the (Byzantine) consensus problem.
In this work, we consider two variations of Byzantine consensus:\\ 
(a) \emph{\cvbc}, where if all correct processes propose the same value, then this should be the one decided~\cite{attie:IPL:2002}.  \\(b)  \emph{\sbc}, where correct processes must agree on a value proposed by at least one correct process~\cite{Malkhi+:DC:2003,Bessani+:TPDS:2009}.

For completeness, we also briefly discuss \wbc \cite{Malkhi+:DC:2003}, although this is not the focus of this work.

\begin{definition}[\cvbc] 
	\label{def:consensus}
	Given a set of $n$ processes out of which $f<n$ might be faulty, and a set of values \spcolor{$V_c$} proposed by correct processes, the following must hold:
	\begin{itemize}
		\item \textbf{Termination: } Every correct process must eventually decide.
		\item \textbf{Agreement: } The value decided by correct process must be identical.
		\item \textbf{Common Validity: } If $V_c =\{v\}$, then $v$ must be the consensus value.
	\end{itemize}
\end{definition}

\begin{definition}[\sbc] 
	\label{def:strongconsensus}
	Given a set of $n$ processes out of which $f<n$ might be faulty, and a set of values $V_c$ proposed by correct processes, the following must hold:
	\begin{itemize}
		\item \textbf{Termination: } Every correct process must eventually decide.
		\item \textbf{Agreement: } The value decided by every correct process must be identical.
		\item \textbf{Strong Validity: } The consensus value must be proposed by at least one correct process, \ie final decision $\in V_c$.
	\end{itemize}
\end{definition}

\begin{wrapfigure}{r}{0.4\textwidth} 
\centering
 \includegraphics[width=\linewidth]{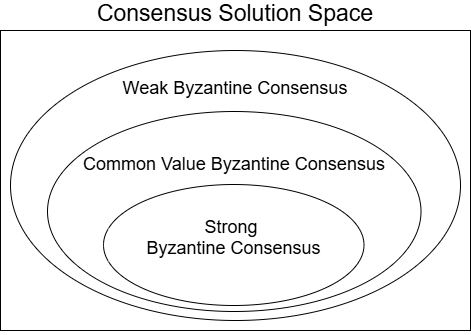}
\caption{Illustration of Consensus Solution Space}
\label{fig:Con-Space}
\end{wrapfigure}

A few remarks on this definition:\break (a) The set $V_c \subseteq V$, where $V$ is the set of all the values proposed, including those proposed by Byzantine processes. (b) Correct processes do not know the set $V$ (or $V_c$) in advance. (c) Malkhi {\em et al.}~\cite{Malkhi+:DC:2003} originally introduced \sbc in the context of binary consensus (decided value is either 0 or 1). Our definition generalizes it to support multi-valued consensus.

Malkhi {\em et al.}~\cite{Malkhi+:DC:2003} also considered \emph{weak consensus} in the context of binary proposal values, that is, the decided value is not necessarily one proposed by a correct process. This represents the most lenient approach, where, even if all correct processes propose the same value (e.g., 0), the consensus might still settle on a different value (e.g., 1) due to potential influence from faulty processes. This flexibility allows for the final decision to possibly be an input from a faulty process as well. 

\begin{definition}[\wbc] 
	\label{def:weakconsensus}
	Given a set of $n$ processes out of which $f<n$ might be faulty, and a set of values $V$ proposed by processes including the Byzantine ones, the following must hold:
	\begin{itemize}
		\item \textbf{Termination: } Every correct process must eventually decide.
		\item \textbf{Agreement: } The value decided by every correct process must be identical.
		\item \textbf{Validity: } The consensus value must be proposed by at least one process, \ie final decision $\in V$.
        
	\end{itemize}
\end{definition}

\figref{Con-Space} visually depicts the consensus solution space of the three types of Byzantine consensus mechanisms explored in this work. Observe that \sbc enforces strict criteria, ensuring that only values proposed by correct processes are eligible for the final decision. This makes \sbc the most reliable consensus method, as it guarantees that no erroneous values from faulty processes can influence the outcome. In contrast, \cvbc relaxes the validity requirement. When only a single unique value is proposed by correct processes, that value is chosen as the consensus value. However, if multiple values are proposed, the protocol does not mandate a specific choice (it could even be a non-proposed value), as long as the decision is consistent across correct processes.

\section{Byzantine Consensus Solution}
\label{sec:solns}
In this section, we present our solution for addressing both the \sbc and \cvbc. We first describe the design a novel object, called \abcas object (inspired from sticky bits), which is leveraged by the consensus algorithm. \smallskip

\noindent {\bf \BVCAS Shared Object.}
As we discussed in Section~\ref{sec:model}, the processes can access the shared memory via shared objects. In addition, the warp scheduler is scheduling the processes, in a fair manner, to execute their protocol, including accessing the memory. As argued by Malkhi {\em et al.}~\cite{Malkhi+:DC:2003}, in shared memory, and in the presence of Byzantine processes, a form of {\em persistent non-corruptible} object, like sticky bits, is needed to solve consensus. Otherwise, a value written by a correct process could later be overwritten by a Byzantine one, and hence leading to inconsistent states, which in turn can prevent reaching agreement. (Thus, simple registers or similar objects cannot be used.) 

Sticky bits~\cite{Malkhi+:DC:2003} are restricted to binary values (and hence useful for binary consensus). Since we consider multi-valued consensus, we needed to utilize a different object. In particular, we assume that the processes have access to a {non-corruptible}  shared object, which we call {\em Sticky Compare\&Swap}, or \BVCAS for short (combination of sticky bits and \cas object); through this object the processes can access the shared global memory. \smallskip

\noindent \textbf{\em Specification.}
\BVCAS{} maintains a {\em totally ordered list} of $n$ values, initially empty.  \sloppy{It supports two operations \Fwrite{($val$)} and \Fview{($len$)}.}
\begin{enumerate}   
    \item \Fwrite{($val$):} 
    \begin{enumerate}[label=\textnormal{(\alph*)}]
    \item It attempts to append $val$ to the list of values maintained if the maximum limit has not been reached.\label{spec:appendA} 
   \item It returns \textit{write successful} if $val$ is appended successfully; \textit{write failed} if append failed, and  \textit{limit reached} if $n$ values have been already appended. \label{spec:appendB}
    \end{enumerate}
    \item \Fview{($len$):} Returns the first $len$ values added to the list. 
\end{enumerate}    
\noindent \textbf{\em Properties.}
Recall that in our model, processes are grouped into warps, and in any given phase, only one warp is executing. Furthermore, (correct) processes within the same warp execute the same instructions. So, when multiple processes in the same phase invoke \Fwrite{()}, and provided the limit has not been reached, {\em only one} can succeed; all other processes should receive \mpcolor{\textit{failed}}. In other words, in any given phase, at most one process can change (append) the state of the list (and once a value is written, it cannot be overwritten or be deleted, as no such operations are available). \Fview, on the other hand is a concurrent read-only \mth. We elaborate more on these issues in Section~\ref{sec:correctness}.\smallskip  


\noindent{\bf Consensus Algorithm. } We proceed with the description of the consensus algorithm, Algorithm~\ref{alg:consensus}. The key issues that lead to its correctness are detailed in Section~\ref{sec:correctness}.   

\begin{algorithm}[t]
    \label{alg:consensus}
    \caption{Consensus Algorithm}
    \SetAlgoLined
    \textbf{DECIDE :} \\
    \tcc{On receiving first access to the \textit{\BVCAS} by the Scheduler}
    \Indp
    \Fwrite{($proposedValue$)} \label{lin:propose}\;
    \Indm
    \tcc{On receiving further accesses to the \textit{\BVCAS} by the Scheduler}
    \Indp
    ProposedList $\gets$ \Fview{($\ceil{n/p}$)}\;
    \Indm
    \Return $mode$(ProposedList)\;
    \label{lin:decide}
\end{algorithm}

Recall that the systems consists of $\ceil{n/p}$ warps ($p = $ warp size), and each warp execution is a phase in the model. Specifically, in every phase, the warp scheduler grants access to a different warp of $p$ processes to execute a fixed number of instructions from their protocol.  As we discussed later, the warp scheduler ensures that once a process is given access to execute, it will again be given access after $\ceil{n/p}$ phases (\ie after all other warps were granted access).

The consensus algorithm is designed to ensure agreement among processes using \BVCAS as a fundamental building block. Initially, each (correct) process, upon its first access to \BVCAS, attempts to write its proposed value using \Fwrite{} (Line~\ref{lin:propose}). This ensures that all participating processes compete to contribute their values to the shared memory structure. Subsequently, when a process gains further access to \BVCAS (regardless of the outcome of its \Fwrite{} operation), it retrieves a subset of previously written values using \Fview{}, which returns the first $\ceil{n/p}$ values appended in the list. The process then determines the consensus decision by computing the \textit{mode}\footnote{The \textit{mode} of a set \( S \) is defined as the element \( x \in S \) that occurs most frequently. In case of a tie, the smallest such element is chosen.
} of the retrieved values, selecting the most frequently occurring value. The use of \BVCAS ensures that all processes have access to the same sequence of written values, and by taking the mode, the system converges to a single consensus value.


Conceptually the idea seems simple, but there are a few subtle issues that need to be examined carefully in order to show that this indeed solves Byzantine Consensus. We do so in the next section.

\section{Correctness and Fault-Resilience of the Solution}
\label{sec:correctness}

In order to show that Algorithm~\ref{alg:consensus} solves both versions of Byzantine consensus we consider in this work (cf. Section~\ref{sec:ByzIntro}), we impose specific assumptions on top of the model presented in Section~\ref{sec:model}. We justify these assumptions based on existing GPU architectures. The reason we have not imposed these assumptions in the model is because we want to keep the model more general (to be used in other fundamental problems); these assumptions are needed for the specific solution we provide. (We do not claim that these specific assumptions are necessary to solve consensus, they are however, sufficient for the consensus solution we provide.)  

\noindent {\bf \mpcolor{Specific Assumptions}.}
To solve Byzantine consensus we further assume:\smallskip

\noindent
\textbf{(a) Round Robin (RR) Warp Scheduler.} GPUs are designed with an internal warp scheduler to efficiently manage core utilization and prevent idling or resource wastage. This scheduler ensures that computational resources are allocated effectively, maintaining high throughput. Recall that in our proposed model (Section~\ref{sec:model}), we assume that the scheduler functions as a fair scheduler, meaning it treats all warps equitably and is not influenced by external disruptions. Furthermore, we assume that Byzantine processes cannot affect the scheduler’s behavior. In this section, additionally, we assume that the scheduler follows a {\em round-robin} scheduling strategy, distributing phases evenly among all available warps where each warp is scheduled again, only after all other warps have been scheduled. We believe this is a reasonable and realistic assumption, as Round-robin in GPUs is a baseline scheduling approach that chooses warps according to their warp ID in either increasing or decreasing order~\cite{Jeon:springer:2025}. Multiple improvements have been proposed and implemented over the baseline scheduling approach for efficient utilization of the GPU cores.\smallskip

\noindent
\textbf{(b) \abcas Operations within a Warp Phase.} As detailed in \secref{model}, modern GPUs schedule a constant number of instructions before switching warps \cite{NVIDIA:2024}. Here, we make the latency of a phase more specific with respect to the \abcas operations. Since \Fwrite{($val$)} appends to the end of the list, it requires a constant number of instructions regardless of the list's length. Thus, we assume that this operation fits within a single phase, and for simplicity, no more than one \Fwrite{} can occur per phase (otherwise, the termination bounds shown in the forthcoming analysis could be adjusted accordingly). However, \Fview{($len$)} may span multiple phases due to the list’s variable size. Specifically, we set $r\leq n$ to be the number of values that \Fview{} can reads within a phase. This means that reading $len$ values takes $\ceil{n/p} \cdot \ceil{len/r}$ phases. Given current warp scheduling behavior, "this assumption aligns with the working of modern hardware architectures.\smallskip

\noindent
\textbf{(c) Global Memory Access.} In our solution, we assume that the global memory isn’t directly accessible to processes. Instead, it can only be reached through specific objects, like the sticky bits~\cite{Malkhi+:DC:2003} or the \abcas object we’ve designed. These objects serve as controlled gateways to the global memory, ensuring that all interactions go through well-defined methods. The APIs of these objects cannot be changed, meaning processes have to work within the constraints of the existing methods to read or modify the state. This approach helps maintain consistency and control over how the global memory is used.

\noindent With these assumptions in place, we are now ready to prove the correctness and resilience of our consensus solution.

\begin{table}[]
\centering
\color{black}
\begin{tabular}{|>{\centering\arraybackslash}m{2cm}|>{\centering\arraybackslash}m{10cm}|}
\hline
\textbf{Parameter} & \textbf{Description} \\ 
\hline
\textbf{$n$} & The total number of processes in the system. \\ 
\hline
\textbf{$V$} & The set of proposed values from all processes including the faulty ones. \\ 
\hline
\textbf{$V_c$} & The set of proposed values by correct processes. \\ 
\hline
\textbf{$f$} & The maximum Byzantine processes that the system can tolerate. \\ 
\hline
\textbf{$p$} & The number of processes assigned per warp or phase. \\ 
\hline
\textbf{$r$} & The number of memory elements that can be read in a given phase.\\  
\hline
\end{tabular}\smallskip
\caption{System Parameters}
\label{table:par}
\end{table}


\noindent
{\bf Correctness proofs.} For convenience, we provide Table~\ref{table:par} that summarizes the main system parameters.
In \thmref{scon} we prove that \sbc is solved within $\Theta(n^2/p^2r)$ phases, provided that \spcolor{$f< \frac{n}{(|V_c|+1)p}$}. It follows that the same algorithm solves \cvbc within the same number of phases, with \spcolor{$f<\ceil{\frac{n}{2p}}$, since $|V_c|=1$}.
\begin{theorem}\label{thm:scon} 
\algoref{consensus} solves the \sbc problem within $\Theta(n^2/p^2r)$ phases, for \spcolor{$f< \frac{n}{(|V_c|+1)p}$}.
\end{theorem}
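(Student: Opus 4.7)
The plan is to establish the three properties of \sbc\ (Definition~\ref{def:strongconsensus}) one by one, each directly tied to a feature of the model (round-robin scheduler, one \Fwrite\ per phase, append-only \BVCAS).

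\textbf{Termination and phase count.} Under the round-robin scheduler (assumption (a)), warp $i$ is reactivated every $\lceil n/p \rceil$ phases. A correct process performs \Fwrite\ during its first activation, which costs exactly one phase by assumption (b), and then completes \Fview$(\lceil n/p \rceil)$ across $\lceil \lceil n/p \rceil / r \rceil$ subsequent activations. Since consecutive activations of the same warp are $\lceil n/p \rceil$ phases apart, the last correct process decides by global phase $\lceil n/p \rceil + \lceil n/p \rceil \cdot \lceil \lceil n/p \rceil / r \rceil = \Theta(n^2/(p^2 r))$, simultaneously giving termination and the phase bound in the theorem statement.

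\textbf{Agreement.} Because \BVCAS\ is append-only (assumption (c)) and assumption (b) caps appends at one per phase, the prefix of the first $\lceil n/p \rceil$ positions of the list becomes immutable once those slots are filled. Every correct process begins its \Fview\ at its second activation, no earlier than global phase $\lceil n/p \rceil + 1$, so the multiset each correct process feeds into mode is identical. Since mode is deterministic with the canonical tie-break of Definition~\ref{def:Mode}, every correct process returns the same value.

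\textbf{Strong validity.} Count the Byzantine entries in the length-$\lceil n/p \rceil$ prefix. A slot can hold a Byzantine value only if its corresponding phase scheduled a warp containing at least one faulty process, so at most $f$ of the slots are Byzantine and at least $\lceil n/p \rceil - f$ hold values in $V_c$. By pigeonhole, some $v^\star \in V_c$ occupies at least $(\lceil n/p \rceil - f)/|V_c|$ slots. The hypothesis $f < n/((|V_c|+1)p) \le \lceil n/p \rceil/(|V_c|+1)$ rearranges to $(\lceil n/p \rceil - f)/|V_c| > f$, so $v^\star$'s multiplicity strictly exceeds the total number of Byzantine prefix entries and therefore the multiplicity of any single Byzantine value; mode returns an element of $V_c$.

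The main obstacle is justifying that the ``prefix'' really behaves this way against a worst-case adversary. Two deviations deserve explicit handling: a fully Byzantine warp may suppress its first-activation \Fwrite, leaving a slot unfilled during the initial $\lceil n/p \rceil$ phases and appending later while correct warps are already executing \Fview; and Byzantines may issue extra \Fwrite\ calls in their own later phases, stretching the list beyond $\lceil n/p \rceil$ entries. Both deviations I would dispatch via assumption (c) -- \Fwrite\ is strictly append-only, so an already-read prefix entry can never be altered -- together with the observation that such deviations only change \emph{when} a Byzantine-tagged slot is filled (possibly never), not the bound on how many prefix slots are Byzantine. Hence the counting inequality above applies to whichever prefix length correct processes actually observe, and combined with the two preceding steps this closes the proof.
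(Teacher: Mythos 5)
Your proposal is correct and, for Termination and Agreement, follows essentially the same lines as the paper: the round-robin accounting $\lceil n/p\rceil + \lceil n/p\rceil\cdot\lceil\lceil n/p\rceil/r\rceil = \Theta(n^2/(p^2r))$ and the immutability of the length-$\lceil n/p\rceil$ prefix are exactly the paper's arguments. Where you genuinely diverge is Strong Validity: the paper runs a five-way case analysis on the adversary's strategy (Byzantines propose NULL, all-distinct values, a common value $v'\notin V_c$, values from $V_c$, or a mix), with the decisive counting inequality appearing only inside case (c); you instead give a single strategy-agnostic pigeonhole argument --- at most $f$ prefix slots are Byzantine, so some $v^\star\in V_c$ attains multiplicity at least $(\lceil n/p\rceil-f)/|V_c| > f \geq$ the multiplicity of any value outside $V_c$, hence the mode lies in $V_c$. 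This is cleaner and strictly subsumes the paper's cases (the strictness of the inequality also disposes of the tie-break worry that the paper never mentions), and it buys you a proof that does not need the somewhat awkward $f<1$ digression of the paper's case (b). Your closing paragraph also surfaces the one genuinely delicate point that the paper glosses over: a fully Byzantine warp that withholds its first-round append can leave a prefix slot empty and fill it later, so two correct processes reading at different phases could in principle observe different multisets, threatening Agreement rather than Validity. Your appeal to assumption (c) bounds how many prefix slots can ever be Byzantine (which is all Validity needs), but it does not by itself show that all correct readers see the \emph{same} prefix; to fully close that you would need to argue either that such a late-filled slot is read consistently (e.g., that all reads of position $j$ occur after the last possible append to it) or that a NULL-vs-filled discrepancy cannot change the mode given the multiplicity gap you established. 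The paper's own proof has the same soft spot, so this is a refinement to flag rather than an error unique to your argument.
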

\begin{proof}

For \sbc, we want the decided consensus value to be proposed by at least one correct process. We show that Agreement, Strong Validity, and Termination properties follow essentially from the properties of the \BVCAS object and the RR Warp Scheduler.\smallskip 

\noindent
\textbf{\em Properties of \BVCAS and RR Scheduler:} \BVCAS ensures that in a given phase, if $p$ processes attempt to invoke \Fwrite{} simultaneously, then only one will succeed, while the others will fail (see \Fwrite.\ref{spec:appendB}). Therefore, if a Byzantine process executes \Fwrite{} in a phase, it can potentially succeed, \ie \emph{win} the \Fwrite{}, and append a value $v$. The RR scheduler ensures that a second access to a process is only allowed after all processes have completed their first access. Thus, if process $P_i$ executes an \Fwrite{} request in phase $\ell$, it will be able to invoke \Fview{($\ceil{n/p}$)} in phase $\ell+\ceil{n/p}$.\smallskip

\noindent
\textbf{\em Agreement:} All the \Fwrite{} calls in a single phase try to append to the list in a single position as they execute in lock step manner. This is the position in the array next to the last appended proposed value in the \BVCAS object. Once appended, the value can not be modified and access to the memory is allowed only through the \BVCAS object. Therefore \BVCAS guarantees that $ProposedList$ is consistent across all processes. So all correct processes will read the same list, compute the same mode value, agree on the same value and thus, fulfilling the agreement property. \smallskip

\noindent
\textbf{\em Strong Validity:}
Given that there are $f$ Byzantine processes, in the worst-case scenario, all Byzantine processes can win the \Fwrite{()} in $f$ phases. Consequently, the maximum number of phases with Byzantine presence is $f$ out of the $\ceil{n/p}$. The remaining $\ceil{n/p} - f$ values are proposed by correct processes. We consider the following cases.

\noindent \texttt{Case (a): The Byzantine processes propose NULL or an empty value.} In this case the $mode( ProposedList )$ results in a value proposed only by correct processes.\\
\spcolor{
\noindent \texttt{Case (b): All Byzantine processes propose a different value.} In this case the values proposed by them will never gain majority. Consider a subsequent case where the values proposed by correct processes are also different, that is, $ProposedList$ contains $\ceil{n/p}$ different values. So $|V|= \ceil{n/p}$, as there are $\ceil{n/p}$ unique values. Substituting $\ceil{n/p}$ for $|V|$, in $f< \frac{n}{(|V_c|+1)p}$ gives {$f< \frac{|V|}{(|V_c|+1)}$}. Given that in the set $V$ at most $f$ malicious values can be proposed, in this scenario $|V|-|V_c|=f$.
Substituting \( |V| = f + |V_c| \) into the inequality, we get $f (|V_c| + 1)< f + |V_c|$. Solving this for $f$ gives $f<1$. Thus, a solution is possible only if there are no faulty processes, or at least two correct processes propose the same value (or a correct and a faulty propose the same value). In all these cases, validity is satisfied. }\\
\noindent  \texttt{Case (c): The Byzantine processes collude on appending the same value $v' \notin V_c$.} 
To consider the worst case, assume that the correct processes append values different from $v'$, \ie we have $(\ceil{n/p}) - f$ values from $V_c$. Even if the values appended by the correct processes are equally divided across all the values in $V_c$, \ie $\frac{(\ceil{n/p}) - f}{|V_c|}$, $v'$ can only be chosen when $f > \frac{(\ceil{n/p}) - f}{|V_c|}$. This would mean that $f \geq \frac{n}{(|V_c|+1)p}$, which contradicts our initial assumption. Therefore, the value proposed by the Byzantine processes can only be chosen if one or more correct processes also propose $v'$, making it the result of $mode( ProposedList )$. This ensures the validity of \sbc.

\noindent \texttt{Case (d): Byzantine processes proposing values from $V_c$.}
In this case, Byzantine processes do not introduce any new values but instead propose values already present in $V_c$, the set of values proposed by correct processes. The goal of Byzantine processes in this case could be to manipulate the final decision by influencing the frequency of certain values in the $ProposedList$. Since the final decision is based on $mode(ProposedList)$, a Byzantine process can only successfully influence the outcome if it can make a faulty-preferred value the majority. However, because there are at most $f$ Byzantine processes and at least $ (\ceil{n/p}) - f $ correct values, their ability to shift the mode is limited. If correct processes propose values uniformly across $V_c$, the most frequent value will always be one proposed by correct processes unless the Byzantine processes coordinate with some correct processes to propose the same value. Even in the worst case where Byzantine processes try to bias the mode toward a specific correct value, the chosen value will still belong to $V_c$, ensuring strong validity.
 
\noindent  \texttt{Case (e): A mix of strategies by Byzantine processes.} 
In this case, Byzantine processes do not follow a single strategy but instead use a combination of the previous approaches. Since we have already shown that in each individual case the Byzantine processes cannot violate validity, a mixture of these strategies will not allow them to break strong validity either.

Thus, no matter what strategy or combination of strategies the Byzantine processes employ, they cannot force an invalid value to be chosen, and strong validity is preserved with the claimed bound on $f$.\smallskip

\noindent
\textbf{\em Termination:}
In our system model, during any given phase, each process performs a fixed number of instructions. An \Fwrite{} operation requires a constant number of instructions and will complete within the phase. 
With a RR scheduler, each process's \Fwrite{} invocation will occur once within the first $\ceil{n/p}$ phases. The \Fview{($len$)} operation might span multiple phases to complete and will take $\ceil{n/p} \cdot \ceil{len/r}$ phases for each process. Combining \Fwrite{} and \Fview{}, and given that $len=\ceil{n/p}$ (as per \algoref{consensus}), all correct processes will decide on the same value within $\Theta(n^2/(p^2r))$ phases, provided that \spcolor{ $f < \frac{n}{(|V_c|+1)p}$}. This completes the proof. \qed 
\end{proof}
From \thmref{scon} we get the following result for \cvbc.
(We have illustrated in the Byzantine consensus solution space \figref{Con-Space}, that any algorithm solving \sbc will solve \cvbc.)

\begin{corollary}
\spcolor{\algoref{consensus} solves the \cvbc problem within $\Theta(n^2/p^2r)$ phases, for $f< \ceil{\frac{n}{2p}}$.}
\end{corollary}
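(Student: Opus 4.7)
The plan is to derive this corollary essentially as a specialization of \thmref{scon}. The core observation is that any protocol satisfying \sbc automatically satisfies \cvbc, since \sbc's Strong Validity (the decision lies in $V_c$) is strictly stronger than \cvbc's Common Validity (the decision equals $v$ whenever $V_c=\{v\}$). Agreement and Termination have identical statements in both definitions, so they carry over from \thmref{scon} without change, including the $\Theta(n^2/p^2r)$ phase bound.

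First I would observe that Common Validity is a conditional property that only constrains the decision when all correct processes propose the same value, i.e., when $|V_c|=1$. Substituting $|V_c|=1$ directly into the threshold from \thmref{scon} gives
\[
f \;<\; \frac{n}{(|V_c|+1)p} \;=\; \frac{n}{2p},
\]
which matches the claimed bound. Thus the resilience threshold of the corollary is exactly the instance of the \thmref{scon} threshold at $|V_c|=1$.

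Next I would verify Common Validity directly (as a sanity check, rather than only citing \thmref{scon}): when $V_c=\{v\}$, every correct process invokes \Fwrite{($v$)} on its first access, so among the $\ceil{n/p}$ winning \Fwrite{} calls across the $\ceil{n/p}$ warp phases, at least $\ceil{n/p}-f$ append the value $v$, while at most $f$ slots can be occupied by Byzantine-chosen values. Under $f<n/(2p)$ we have $\ceil{n/p}-f>f$, so $v$ strictly dominates in $ProposedList$ and therefore equals $mode(ProposedList)$. Combined with Agreement (all correct processes read the same list, by the \BVCAS non-corruptibility argument used in \thmref{scon}) and Termination (unchanged), this establishes the corollary.

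I do not anticipate a real obstacle: the only subtlety is making sure the $|V_c|=1$ substitution into \thmref{scon}'s bound is justified, which it is because \cvbc only imposes a validity requirement in that one regime, while for $|V_c|\ge 2$ any consistent decision is acceptable and Agreement plus Termination suffice. The phase complexity is unaffected since the algorithm and the scheduler analysis are unchanged.
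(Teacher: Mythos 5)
Your proposal is correct and follows essentially the same route as the paper: the corollary is obtained from \thmref{scon} by observing that any \sbc solution also solves \cvbc and that Common Validity only bites when $|V_c|=1$, which substituted into the theorem's threshold yields $f<\frac{n}{2p}$. Your additional direct check that $\ceil{n/p}-f>f$ forces the mode to be $v$ is a sound (and slightly more explicit) version of the same argument.
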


\noindent {\bf \mpcolor{Remarks on Resilience}.}
Observe that for \sbc the resilience depends on $|V_c|$: as the size of $V_c$ increases, the resilience decreases, and when $|V_c|+1 \geq \ceil{n/p}$, the algorithm is no longer resilient. It would be interesting to investigate whether this is not only a sufficient, but also a necessary condition within the system model we consider. 

Also observe that the resilience depends on $p$, the number of processes that the scheduler permits access to \BVCAS in a given phase. The lower $p$ is, the higher resilience we get. In the case that $p=1$ and $|V_c|=1$, $f$ can be as large as $\ceil{n/2}$ for \sbc and \cvbc.

\mpcolor{It is essential to highlight that the current analysis adopts a deliberately pessimistic perspective. This work is an initial step toward understanding the model’s resilience under worst-case adversarial conditions. We assume that a single Byzantine thread in a phase always succeeds in writing its own value,  while honest threads fail, which is an unrealistic scenario that likely underestimates practical resilience. A more optimistic probabilistic fault model, where, e.g., all threads have equal chances of success, could reveal significantly higher resilience.}

\section{Related Work}
\label{sec:related-sec}
In this work, we aim to formalize the model of GPUs with Byzantine faults and explore the problems that can be solved within this framework. Consensus is one of the problems that is being extensively studied in message passing~\cite{miller:SIGSAC:2016,gilad+:SOSP:2017,Liu+:IEEETransactions:2019} and shared memory~\cite{attie:IPL:2002,Malkhi+:DC:2003,10.1007/s00446-005-0125-8}. Diverse researchers have approached the Byzantine consensus problem in shared memory systems in various ways. Malkhi {\em et al.}~\cite{Malkhi+:DC:2003} demonstrated that Byzantine-tolerant objects in shared memory can be constructed with $f < n/3$ tolerance, using access control lists, persistent objects with defined fault tolerance limits, and redundancy. Sticky bits were utilized by Alon {\em et al.} \cite{10.1007/s00446-005-0125-8} to solve \sbc among $n$ processes where $n \geq 3f+1$, $f$ being the number of faulty processes. Attie \cite{attie:IPL:2002} showed that weak Byzantine consensus can be achieved, but only by using non-resettable or sticky shared objects even in a reliable shared memory setting. Attie \cite{attie:IPL:2002} has also detailed an impossibility proof that shows that even with limited-access restriction to Byzantine processes, it is impossible to achieve Byzantine consensus. However, we have not been able to find any implementation of sticky bits in hardware. The closest available technology, Write-Once-Read-Many (WORM) \cite{worm:techtarget:2022}, lacks concurrency support. 

Construction of shared memory objects like Byzantine tolerant single writer multi reader (SWMR) registers have been extensively explored over the years~\cite{CKDISC2021,Malkhi+:DC:2003,hu+:DC:2024}. Hu and Toueg  have recently proposed implementation of SWMR registers from SWSR registers \cite{hu+:DC:2024} using signatures. A new computing paradigm where shared memory objects are protected by fine-grained access policies was introduced by Bessani {\em et al.}~\cite{Bessani+:TPDS:2009}. Their research highlights the need for byzantine tolerant protocols and objects for shared memory specifically as such solutions have been long prevalent in message passing systems. Construction of a shared memory object called Policy-Enforced Augmented Tuple Space (PEATS) is also detailed in their work. Their proposed model has a resistance of $n \geq 3f+1$.

These studies underscore the necessity for Byzantine-tolerant protocols and objects in shared memory systems. Note that most research assumes security premises such as Public Key Infrastructure (PKI) or controlled access, which are complex to implement and design. We build on these works to design a system framework that leverages existing hardware to tolerate Byzantine faults.

\section{Conclusion}
Our research aims to leverage existing hardware and software capabilities to address the challenges posed by Byzantine faults. In this paper, we formalize a model of shared memory systems inspired by GPU architectures, which can handle processes exhibiting Byzantine faults.

Our contributions include demonstrating the feasibility of achieving \sbc and \cvbc within this model by utilizing a concurrent object, called \BVCAS. In Appendix~\ref{app}, the interested reader can find a crash-tolerant implementation of \BVCAS in a non-corruptible shared memory setting; the implementation builds on Compare\&Swap (CAS) objects, which are known to be implemented in hardware.

One might wonder why we cannot use a CAS object in place of \BVCAS. CAS is used for accessing a specific memory location. For the needs of our consensus algorithms, we want to ensure that once a value is successfully written in a memory location, it cannot be modified. In the presence of a Byzantine process, it seems that CAS cannot ``protect'' a memory location from not being updated again.This requirement of memory protection is facilitated by \bvcas.

We believe our work opens new opportunities for the Distributed and Parallel Computing community to explore. Our abstraction focused on a single block of a single SM, as multiple schedulers may operate across different blocks. Future work could investigate synergies between warp schedulers for interblock computations. Additionally, we aim to enhance the understanding and modeling of GPU behavior under Byzantine faults, by considering other fundamental problems of distributed computing within the GPU-inspired model we have devised. An interesting question is whether the resilience achieved in this work is optimal.\medskip

\noindent
\textbf{Acknowledgments: }We would like to thank  Gadi Taubenfeld and anonymous reviewers for their insightful comments.

\bibliographystyle{splncs04} 
\bibliography{citations} 

\appendix
\section{Appendix}
\label{app}
\subsection{Algorithm}
\label{sec:abcas-imple}
\textcolor{black}{In this section, we present a \emph{wait-free crash-tolerant} implementation of \BVCAS. As stated in \secref{solns}, we assume that global memory is accessible only through predefined objects and that object APIs cannot be modified by Byzantine processes. Under this assumption, a crash-tolerant object is sufficient to achieve Byzantine consensus in the proposed model.}

\subsection{Preliminaries}
\label{sec:app-prelim}

\noindent We first provide some basic concepts that will help us understand the \BVCAS implementation.

\vspace{1mm}
\noindent
\textbf{Crash-tolerant object: }A \emph{crash failure} is a special case of a Byzantine failure, where the faulty component stops taking any further steps indefinitely; until that point it always follows the specified protocol. A concurrent object is \emph{crash-tolerant} if it is able to satisfy the object specification despite threads invoking it crash midway.  

\vspace{1mm}
\noindent
\textbf{Atomic Object: }A concurrent object/instruction is \emph{atomic} if all the \mth{s} of the object appear to take effect instantaneously at some point between the method’s invocation and response \cite[Chap. 3]{HerNir:AMP:Book:2012}, \cite{HW1990}.



\vspace{1mm}
\noindent
\textbf{Wait-Freedom: }A concurrent object is \emph{\wf} if a non-crashed process is able to complete all its \mth{s} in finite number of its own steps regardless of execution of other processes that could potentially crash. 

\vspace{1mm}
\noindent
\textbf{Compare and Swap (CAS): }It is an atomic wait-free hardware instruction supported by several modern day architectures. A \cas object is a wrapper around the hardware \cas instruction. Consider a \cas object $O$ which is initialized with a value $v$. The \cas object supports two \mth{s}: (a) $O.\CAS(\varepsilon, \nu)$: This \mth atomically compares the current value of $O$ with the expected value $\varepsilon$. If it is same, then $\varepsilon$ is replaced with $\nu$ in $O$ atomically. And this \mth returns true. If the contents of $O$ are not same as $\varepsilon$, then $O$ remains unchanged and the \mth returns false. (b) $O.read()$: This \mth returns the current value of $O$. 




\subsection{The \BVCAS Algorithm}
\label{subsec:Linz-BVCAS}

\algoref{synABCAS} provides a crash-tolerant
implementation of \BVCAS object. This implementation is designed to function effectively in a setting composed of a collection of crash-prone synchronous processes, where at least one does not crash (otherwise it would not have been wait-free). We have utilized the suffix $``\_sh"$, to indicate that the variable is a shared variable across the procedures of the \BVCAS object.

\vspace{1mm}
\noindent 
\textbf{Data-Structure.} For \BVCAS implementation, we are maintaining an array of $n$ atomic registers which are initialized with NULL values. \spcolor{The memory complexity of this \BVCAS implementation is $O(n/p)$, in cases where $p=1$, the complexity is $O(n)$}

\vspace{1mm}
\noindent 
\textbf{Algorithm Details. } We now detail the two main procedures of the implementation.\smallskip

\noindent \Fwrite{($val$)}: In this procedure, all processes within a phase read the counter value ($k$), indicating the array position where they will attempt to write their proposed value. At \Lineref{check} in \algoref{synABCAS}, we verify if $k \leq$ n, signifying that the upper limit for proposals has been reached, disallowing further proposals. If the limit isn't reached, at \Lineref{casTwo}, all correct processes attempt to increment the counter for the next phase, using CAS, to ensure that this phase's value isn't overwritten. In \Lineref{casOne}, all correct processes will attempt to write their proposed value using CAS. 
From the code and the fact that CAS is  wait-free, it follows that this  implementation is also wait-free (see Observation~\ref{ob:abcas-wf}). Therefore, even if all but one processes have crashed, progresses is still achieved and the proposed value will be appended.\smallskip

\noindent\Fview{($len$)}: This procedure reads values from the consensus data structure. It takes a parameter $len$ indicating the number of values to read from the beginning of the data structure. It returns a list of values from $proposedList$ indexed from $1$ to $len$.\smallskip


\begin{algorithm}[t]
    \caption{Crash-tolerant Implementation of \textbf{\BVCAS}}
    \label{alg:synABCAS}
    \SetAlgoLined
    \SetKwFunction{Constructor}{Constructor}
    \SetKwProg{Class}{class}{:}{end}
    \SetKwFunction{Fwrite}{StickyCAS-append}
    \SetKwFunction{Fview}{StickyCAS-read}
    \SetKwComment{Comment}{// }{}
    \SetKwData{Counter}{counter}
    \SetKwData{N}{N}
    \SetKwData{Status}{status}
    \SetKwData{Result}{result}
    \SetKwProg{Fn}{{Procedure}}{:}{\textbf{end}}
    \SetKwFunction{CAS}{CAS}
    \Class{StickyCAS}{
        \BlankLine
        \Fn{\boldmath$\Constructor${\KwSty{(int n)}}}{
            \tcp{\textbf{Initialize proposedList\_sh and counter\_sh}}
            int \Counter\_sh $\gets 1$  \;
            \tcp{\textbf{here `value' is the datatype of the proposed values}}
            value proposedList\_sh[n] $\gets$ NULL\;
                    \label{lin:init}
        }
        \BlankLine
        { Status }\Fn{\boldmath$\Fwrite${\KwSty{(val)}}}{
            int k $\gets \Counter\_sh$ \; 
            value m $\gets$ proposedList\_sh[k] \;
            \If{($\mathrm{k} \leq$ n)}{
            \label{lin:check}
                counter\_sh.\CAS(k, k+1)\;
                \label{lin:casTwo}
                \tcp{\textbf{Try to insert the value using CAS}}
                \If{(proposedList\_sh[k].CAS(m, val))}
                {
                \label{lin:casOne}
                \textbf{\tcp{Check if already n values are inserted}}
                    \Return ``Append Success''\; 
                    }
                \Else{
                    \Return ``Append Failed''\; 
            }
            }
            \Return ``Limit Reached''\; 
           }
        \BlankLine
        { value[ ] } \Fn{\boldmath$\Fview${\KwSty{(len)}}}{
            \Result[ ] $\gets$ proposedList\_sh[1 to  len]\;
            \Return \Result\;
        }
    }

\end{algorithm}

\noindent We now show the correctness of \algoref{synABCAS}.


\begin{theorem}
	\label{thm:abcas-spec}
	The implementation shown in \algoref{synABCAS} satisfies the specification of an \abcas object described in \secref{model}. 
\end{theorem}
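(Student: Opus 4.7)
The plan is to verify each clause of the \BVCAS specification by establishing a small collection of invariants on the two shared fields \texttt{counter\_sh} and \texttt{proposedList\_sh}, and then argue that every correct behavior of \Fwrite{} and \Fview{} is forced by those invariants together with the atomicity of \CAS and the lock-step execution of processes inside a phase. Concretely, the target specification items to check are: (i) \Fwrite{} appends to the tail and cannot overwrite, (ii) responses are \emph{write successful}, \emph{write failed}, or \emph{limit reached} exactly as prescribed, (iii) within a phase at most one \Fwrite{} succeeds, (iv) \Fview{($len$)} returns the first $len$ appended values, and (v) wait-freedom (for the crash-tolerant setting considered here).

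The central invariants I would maintain are: (I1) \texttt{counter\_sh} is monotonically non-decreasing, since the only write to it is a \CAS from $k$ to $k{+}1$; (I2) for every index $i$, \texttt{proposedList\_sh[}$i$\texttt{]} is either \texttt{NULL} or some value that was the second \CAS-argument of a successful \Fwrite{} and never changes thereafter, because every write is a \CAS whose expected value is the result of an earlier read $m$ of that slot, so once the slot is non-\texttt{NULL} no future \CAS with expected value $m=\texttt{NULL}$ can succeed; and (I3) whenever \texttt{proposedList\_sh[}$k$\texttt{]} becomes non-\texttt{NULL}, \texttt{counter\_sh} is already at least $k{+}1$, because the counter-advancing \CAS on \Lineref{casTwo} is executed before the slot \CAS on \Lineref{casOne} by every process that writes. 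Invariants (I2) and (I3) together give the ``append to the end and no overwrite'' clause of the specification.

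For the per-phase uniqueness clause, I would exploit the lock-step assumption: all correct processes in the same phase read the same $k$ and same $m$ on the first two lines of \Fwrite{}, because no other warp is scheduled between those reads and the subsequent \CAS calls. Then \Lineref{casOne} is a competition of \CAS operations with the same expected value $m$ on the same memory word, and by the atomicity of hardware \CAS exactly one succeeds and returns \emph{Append Success} while all others return \emph{Append Failed}. The \emph{Limit Reached} branch is immediate from the guard $k\leq n$ on \Lineref{check} combined with (I1). For \Fview{}, correctness reduces to observing that by (I2) the prefix of \texttt{proposedList\_sh} of length $\texttt{counter\_sh}{-}1$ is exactly the sequence of successfully appended values in order, so reading positions $1$ through $len$ returns the first $len$ entries. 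Wait-freedom follows from Observation~\ref{ob:abcas-wf} and the fact that each procedure executes a bounded number of \CAS and read steps with no loops.

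The main obstacle I anticipate is the per-phase uniqueness argument, because it is the one place where the proof genuinely depends on the GPU-specific lock-step semantics rather than on purely asynchronous \CAS reasoning: I must argue carefully that between the read of \texttt{counter\_sh} into $k$ and the subsequent \CAS on \Lineref{casOne}, no process from a different warp can interleave and bump the counter or fill slot $k$, so that all contending processes truly race with the same expected values $k$ and $m$. This is exactly what the synchronized-phase assumption of \secref{model} together with assumption~(b) on \Fwrite{} fitting in one phase provides, but I would state it as an explicit lemma (``within any single phase all correct invocations of \Fwrite{} read identical $k$ and $m$''), and then the rest of the proof is a straightforward case analysis on the outcomes of the two \CAS calls.
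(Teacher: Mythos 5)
Your proof is correct and follows the same clause-by-clause decomposition as the paper's, but it is substantially more rigorous where the paper merely asserts. The paper's proof disposes of \Fwrite.\ref{spec:appendA} by pointing at the guard on \Lineref{check}, of \Fwrite.\ref{spec:appendB} by citing ``the behaviours of the \cas object'' and the return statements, and of \Fview{} by appealing to the atomicity of \texttt{proposedList\_sh} plus a one-line linearizability remark; it never states the invariants that make these claims true. Your (I1)--(I3), together with the explicit lemma that all correct same-phase invocations read identical $k$ and $m$, supply exactly the missing content: (I3) is what guarantees that later phases target fresh slots (so the list grows only at the tail and nothing is overwritten), and the same-$m$ lemma is what reduces \Lineref{casOne} to a race of identical \CAS calls of which exactly one wins. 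This buys an argument that makes explicit precisely where the GPU phase structure is needed, whereas the paper's version leans on it silently.

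Two caveats. First, (I2) as literally stated only excludes a successful \CAS whose expected value is \texttt{NULL}; the dangerous case is a process that reads $m$ \emph{after} slot $k$ has been filled and then issues a \CAS with that non-\texttt{NULL} expected value, which would succeed and overwrite. You do close this gap --- by (I3) any such process belongs to a later phase and reads $k'>k$, and same-phase processes all read $m$ before any of them writes --- but that closure belongs inside the justification of (I2) rather than deferred to your final paragraph. Second, your claim that the prefix of length $\texttt{counter\_sh}-1$ is \emph{exactly} the sequence of appended values in order tacitly assumes no slot is permanently left \texttt{NULL}; if every process of some phase crashed between \Lineref{casTwo} and \Lineref{casOne}, a hole would remain and \Fview{} would return it. The paper's proof is silent on this corner case as well, so you are no worse off, but it deserves a sentence.
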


\begin{proof}
	To prove this, we will iterate through each of the items mentioned as a part of the specification of \abcas. We will start with the \Fwrite \mth. 
\textcolor{black}{
	\begin{itemize}
		\item \Fwrite.\ref{spec:appendA}: This specification states that the \Fwrite{$val$} function attempts to add the proposed value $val$ to the list of values if the maximum limit has not been reached. This is ensured by the check on \lineref{check}. 
		\item \Fwrite.\ref{spec:appendB}: This specification states that the \Fwrite{$val$} returns \textit{write successful} when the process has appended successfully, \textit{write failed} if append failed, and  \textit{limit reached} if $n$ values have been already appended. This follows from the behaviours of the \cas object and the return values as shown in the code (Lines 13, 16 and 19).        
        \item In \secref{solns}, we assume that the function call completes within a single phase and that only one invocation of \Fwrite{} occurs per phase. This assumption is supported by the code. The \Fwrite{$val$} \mth executes one read operation and two \cas{} operations on shared memory, amounting to a constant number of instructions. As a result, these operations can fit within a single warp cycle. Given by the short warp cycles property of the model, we can be assured that no process can execute more than one \Fwrite{} per phase.
        \end{itemize}
}
	\begin{itemize}
		\item \Fview: This specification states that the function returns the first $len$ values appended in the list. This follows from the atomic nature of $proposedList\_sh[]$ array. This \mth scans the first $len$ elements and returns them while a concurrent \Fwrite \mth only appends. Hence concurrent \Fwrite and \Fview can be linearized. 
	\end{itemize}
This completes the proof. 
\end{proof}

Having shown the correctness of \algoref{synABCAS} for implementing the \abcas object, we next observe that the implementation of \abcas is \wf. From inspection of the algorithm, one can see that there are no loops. The \abcas implementation uses the \cas object which is \wf. Hence, the following observation follows. 

\begin{observation}
	\label{ob:abcas-wf}
	The implementation shown in \algoref{synABCAS} is \wf. 
\end{observation}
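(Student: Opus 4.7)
The plan is to verify the definition of wait-freedom directly for both methods of \algoref{synABCAS}, leveraging the wait-freedom of the underlying \CAS primitive that is stated as a given in \Apnref{app-prelim}. Recall that a concurrent object is wait-free if every non-crashed process completes each of its method invocations in a finite number of its own steps, irrespective of the progress (or crashes) of other processes. So I need to show that, for every invocation of \Fwrite{} or \Fview{} by a non-crashed process $P_i$, there is an a priori bound on the number of steps $P_i$ takes before returning, independent of the schedule of the other $n-1$ processes.

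First I would inspect the \Fwrite{($val$)} procedure. Its body consists of a fixed sequence of operations: one shared read of \textit{counter\_sh}, one shared read of \textit{proposedList\_sh[k]}, a local comparison, at most one \CAS{} on \textit{counter\_sh}, and at most one \CAS{} on \textit{proposedList\_sh[k]}, followed by a return. There is no loop construct, no wait/retry block, and no busy-waiting on any predicate of other processes' state. Since each primitive read and each \CAS{} on the underlying atomic registers/\CAS{} objects is itself wait-free, \Fwrite{} completes in $O(1)$ of $P_i$'s steps regardless of what any other process does or whether it crashes. In particular, a failing \CAS{} simply causes \Fwrite{} to return ``Append Failed'' without retrying.

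Next I would analyze \Fview{($len$)}. Its body is a bounded sequential scan of the first $len$ entries of \textit{proposedList\_sh} followed by a return. Again there is no loop whose termination depends on another process, and each individual register read is wait-free. Hence \Fview{} completes in $O(len)$ of $P_i$'s steps, a bound determined solely by the argument and independent of other processes' behavior.

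Combining these two observations yields the claim: every non-crashed caller of either method returns within a bounded number of its own steps, so \algoref{synABCAS} is wait-free. I do not anticipate a genuine obstacle here; the only subtle point is to confirm, as was done in the proof of \thmref{abcas-spec}, that there is no implicit retry or blocking hidden in the control flow (in particular, that a failed \CAS{} on \textit{proposedList\_sh[k]} does not loop back). Once this is observed, the wait-freedom of \CAS{} objects propagates directly to \BVCAS, and the observation follows.
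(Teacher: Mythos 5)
Your argument is essentially the paper's own: the paper justifies the observation by noting that the code contains no loops and that the underlying \CAS object is \wf, which is exactly the structure of your reasoning (you simply spell out the bounded step counts for \Fwrite{} and \Fview{} more explicitly). The proposal is correct and matches the paper's approach.
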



\noindent{\bf Using \bvcas object instead of \cas object.} One might wonder why we cannot use a CAS object in place of \BVCAS{} in \algoref{consensus} (Section~\ref{sec:solns}). As mentioned in \secref{app-prelim}, CAS is used for accessing a specific memory location. For the needs of Algorithm~\ref{alg:consensus}, we want to ensure that once a value is successfully written in a memory location, it cannot be modified. In the presence of a Byzantine process, it seems that CAS cannot ``protect'' a memory location from not being updated again (at a later phase of the computation -- Byzantine processes can deviate from the protocol and modify a previously written value). This requirement of memory protection is facilitated by \bvcas.

\end{document}